\documentclass{lmcs}

\usepackage{amsmath,amssymb,mathtools}
\usepackage{booktabs}
\usepackage{hyperref}

\newcommand{\LAo}{\mathrm{LA}_{\!o}}
\newcommand{\AIK}{\mathsf{AIK}}
\newcommand{\Kon}{K_{\mathrm{on}}^{U}}
\newcommand{\Sing}{I_{\mathrm{sing}}}
\newcommand{\Hom}{\operatorname{Hom}}
\newcommand{\Aut}{\operatorname{Aut}}
\newcommand{\Sh}{\mathbf{Sh}}
\newcommand{\Obs}{\operatorname{Obs}}

\title[Ontological Free Will Beyond P versus NP]{Ontological Free Will as Incompressible Information Adjunction:\
A Noncomputability Boundary Beyond P versus NP}

\author[J. Clech]{J\'er\^ome Clech}[a,b]
\address{Colonel, PhD, Habilitation to Supervise Research (HDR); Chairholder of the Chair of Applied Air and Space Strategies, Centre for Aerospace Strategic Studies (CESA), French Air and Space Force, Paris, France}
\address{Associate Researcher, Technology and Global Affairs Innovation Hub, Paris School of International Affairs (PSIA), Sciences Po, Paris, France}
\email{jerome.clech@sciencespo.fr}

\begin{document}

\begin{abstract}
We study a conditional model of ontological free will in which a pre-act state structures several coherent global continuations without intrinsically distinguishing one as the future actualisation.  The topos-theoretic component is formalised by a choice sheaf and by the action of automorphisms preserving the pre-act data: the absence of a fixed point rules out any natural equivariant selection.  An act changes an unpointed object into a pointed one and reduces its symmetry group to a stabiliser.  Independently, a causal algorithmic incompressibility axiom imposes nearly maximal online description complexity on the actual choices.  Uniformly observable post-act traces then allow their reconstruction.  We prove a conditional transfer result: such traces must carry asymptotically all the singularisation information unavailable in the pre-act regime, and no uniform Turing predictor computes the choices from causal histories alone.  The formalism proves neither that human beings satisfy the model nor a separation of P from NP; it identifies a boundary at which prediction concerns the existence of a computable function rather than its time complexity.
\end{abstract}

\maketitle

\noindent\textbf{Keywords:} topos theory, sheaves, free will, Kolmogorov complexity, algorithmic information, causality, prediction.

\section{Introduction}\label{sec:introduction}

The construction separates two claims that are often conflated.  The first is structural: several continuations satisfy the same local constraints, while no continuation is selected by an intrinsic rule preserving all symmetries of the pre-act data.  The second is algorithmic: the sequence that is actually realised has no uniformly short description relative to the histories available before each act.  The first claim is expressed in the language of sheaves and topoi \cite{MacLaneMoerdijk1992,Johnstone2002}; the second is an independent axiom inspired by algorithmic information theory \cite{Kolmogorov1965,LiVitanyi2019}.

The result is conditional.  It does not derive incompressibility from a qualitative ontology and does not purport to prove that human agents satisfy the proposed axioms.  It establishes the following exact implication.  If a sequence of choices is causally incompressible before the acts and becomes uniformly reconstructible after post-act traces are adjoined, then those traces carry an asymptotically maximal amount of singularisation information.  We maintain throughout a strict separation between assumptions, formal consequences, and ontological interpretation.

The paper proceeds as follows.  \autoref{sec:topos} gives the topos-theoretic obstruction to intrinsic selection.  \autoref{sec:online} defines causal online complexity on an effectively enumerable class of interactive prefix machines.  \autoref{sec:traces} formalises post-act traces.  \autoref{sec:transfer} proves the information-transfer theorem and the necessary complexity of adequate traces.  \autoref{sec:complexity} identifies the limited connection with P versus NP.  \autoref{sec:nonvacuity} provides an explicit non-vacuity model, and \autoref{sec:scope} separates the theorem from its ontological interpretation.

\section{Topos-theoretic structure of possibilities}\label{sec:topos}

\subsection{Choice sheaves and coherent continuations}

For a pre-act history $h$, let $(\mathcal C_h,J_h)$ be a site and let
\[
  \mathcal E_h=\Sh(\mathcal C_h,J_h)
\]
be its topos of sheaves.  An object $F_h\in\mathcal E_h$ represents the determinations compatible with the available data.  For each context $V\in\mathcal C_h$, elements of $F_h(V)$ are local determinations, and restriction maps express compatibility.

The globally coherent continuations are the global elements
\[
  \Gamma(F_h)=\Hom_{\mathcal E_h}(1,F_h).
\]
We assume that $\Gamma(F_h)$ is finite, effectively encodable, and has at least two elements.  Thus the model contains genuine global plurality rather than a lack of constraints.

\subsection{Absence of intrinsic selection}

The informal assertion that no section is ``distinguished'' requires a mathematical criterion.  Let $\mathcal D_h$ denote all relevant pre-act data, and let
\[
  G_h=\Aut(F_h,\mathcal D_h)
\]
be the subgroup of automorphisms of $F_h$ preserving $\mathcal D_h$.  It acts on $\Gamma(F_h)$ by composition.

\begin{defi}[Structural non-singularisation]\label{def:structural}
The pre-act state $h$ satisfies structural non-singularisation when
\[
  \Gamma(F_h)^{G_h}=\varnothing.
\]
Equivalently, no global continuation is fixed by every symmetry preserving the pre-act data.
\end{defi}

\begin{prop}[No intrinsic selector]\label{prop:no-selector}
Under \autoref{def:structural}, there is no morphism of $G_h$-sets
\[
  \sigma_h:\{*\}\longrightarrow\Gamma(F_h),
\]
where $\{*\}$ carries the trivial action.
\end{prop}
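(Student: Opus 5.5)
The argument is a direct translation between equivariant maps out of a one-point set and fixed points. A map of sets $\sigma_h:\{*\}\to\Gamma(F_h)$ is the same as a choice of element $s=\sigma_h(*)\in\Gamma(F_h)$. I will show that $\sigma_h$ is a morphism of $G_h$-sets exactly when $s$ lies in $\Gamma(F_h)^{G_h}$. Then \autoref{def:structural} gives the conclusion at once.

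First, I will make the action explicit. For $g\in G_h$, which is an automorphism $g:F_h\to F_h$ preserving $\mathcal D_h$, and a global element $s:1\to F_h$, set $g\cdot s=g\circ s$. This is a group action because composition is associative and $\mathrm{id}_{F_h}\circ s=s$. On $\{*\}$ the action is trivial, $g\cdot *=*$.

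Second, I will unwind equivariance. The condition $\sigma_h(g\cdot *)=g\cdot\sigma_h(*)$ for all $g$ reads $s=g\circ s$ for every $g\in G_h$. This says precisely that $s\in\Gamma(F_h)^{G_h}$. Conversely, any fixed point $s$ defines an equivariant map $*\mapsto s$. This gives a bijection
\[
  \Hom_{G_h\text{-}\mathbf{Set}}(\{*\},\Gamma(F_h))\;\cong\;\Gamma(F_h)^{G_h},
\]
which is the standard fact that the trivial one-point $G$-set represents the fixed-point functor.

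Third, I will conclude. By \autoref{def:structural} the right-hand side is empty, so the left-hand side is empty too, and no such $\sigma_h$ exists.

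I do not expect a genuine obstacle. The only point that needs care is bookkeeping: the action of $G_h$ on $\Gamma(F_h)$ must be well defined, which holds because $G_h$ is a group under composition of automorphisms of $F_h$. The argument also never uses that $\Gamma(F_h)$ is finite or has at least two elements. Those hypotheses matter elsewhere in the paper but not here.
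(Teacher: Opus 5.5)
Your proof is correct and matches the paper's argument: equivariance forces $g\cdot\sigma_h(*)=\sigma_h(*)$ for all $g\in G_h$, so $\sigma_h(*)\in\Gamma(F_h)^{G_h}=\varnothing$, which is impossible. The converse direction you add, giving the full bijection with fixed points, is a harmless extra that the statement does not need.
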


\begin{proof}
Equivariance would imply
\[
  g\cdot\sigma_h(*)=\sigma_h(*)
  \qquad\text{for every }g\in G_h.
\]
Hence $\sigma_h(*)$ would belong to $\Gamma(F_h)^{G_h}$, contradicting \autoref{def:structural}.  Notice that $\sigma_h$ is an external map between $G_h$-sets; it is not a morphism internal to $\mathcal E_h$.
\end{proof}

This proposition excludes an intrinsic selector, namely one natural with respect to the symmetries of the model.  It does not by itself exclude an externally encoded algorithm that selects a section arbitrarily.  That stronger exclusion requires the independent incompressibility axiom introduced in \autoref{sec:online}.

\subsection{The act as pointing and symmetry breaking}

Actualising $s_i^*\in\Gamma(F_{h_i})$ turns the unpointed object $F_{h_i}$ into
\[
  (F_{h_i},s_i^*),
  \qquad
  s_i^*:1_{\mathcal E_{h_i}}\longrightarrow F_{h_i},
\]
an object of the coslice category $1\!\downarrow\!\mathcal E_{h_i}$.  The pre-act symmetry group is reduced to the stabiliser
\[
  G_{h_i,s_i^*}=\{g\in G_{h_i}:g(s_i^*)=s_i^*\}.
\]
Topos theory therefore distinguishes two regimes: a structured but unpointed object before the act, and an object equipped with an actualised global point after it.  The framework does not infer algorithmic incompressibility from this structural distinction.

\section{Causal online description complexity}\label{sec:online}

\subsection{Causal interactive machines}

All finite objects are represented by a fixed computable, injective, self-delimiting binary encoding.  A causal interactive prefix machine $M$ receives a self-delimiting program $p$, then successive frames $\langle i,h_i\rangle$, and finally an end marker $\bot$ after stage $n$.  At stage $i$, it must finish exactly one self-delimiting output before it can read $\langle i+1,h_{i+1}\rangle$.  It halts after $\bot$.  A history $h_i$ may contain earlier choices, but contains no trace or datum produced by the current or a later act.

Fix an effective enumeration $(M_e)_{e\in\mathbb N}$ of these machines.  A universal causal machine $U$ receives a prefix code $\langle e,p\rangle$ and simulates $M_e$ while preserving the input-output schedule.

For $S_n=(s_1^*,\ldots,s_n^*)$ and $H_n=(h_1,\ldots,h_n)$, let $\mathcal P_U(S_n\parallel H_n)$ be the set of prefix programs $p$ such that, for every $i\le n$, $U_p$ outputs $s_i^*$ after receiving $\langle i,h_i\rangle$ and before reading the next frame, then halts after $\bot$.

\begin{defi}[Causal online complexity]\label{def:kon}
\[
  \Kon(S_n\parallel H_n)
  =\min\{|p|:p\in\mathcal P_U(S_n\parallel H_n)\}.
\]
The same program must satisfy all stages simultaneously.
\end{defi}

The double bar records an order constraint: this is not ordinary conditional complexity with all of $H_n$ available at the outset.

\begin{lem}[Universality and invariance]\label{lem:invariance}
For every causal interactive prefix machine $M_e$, there is a constant $c_e$, independent of $n,S_n,H_n$, such that
\[
K_{\mathrm{on}}^U(S_n\parallel H_n)
\le K_{\mathrm{on}}^{M_e}(S_n\parallel H_n)+c_e.
\]
Consequently, for two optimal universal machines $U$ and $V$, there is a constant $c_{U,V}$ such that
\[
\left|K_{\mathrm{on}}^U(S_n\parallel H_n)
-K_{\mathrm{on}}^V(S_n\parallel H_n)\right|
\le c_{U,V}.
\]
\end{lem}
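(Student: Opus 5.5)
The plan is the standard simulation argument for Kolmogorov complexity, adapted so that it respects the causal schedule. Fix $e$ and let $p$ be a shortest program in $\mathcal P_{M_e}(S_n\parallel H_n)$, so that $|p|=K_{\mathrm{on}}^{M_e}(S_n\parallel H_n)$. Since $U$ receives prefix codes $\langle e,p\rangle$ and simulates $M_e$, I will choose the pairing so that its length cost depends on $e$ alone. Concretely, I take $\langle e,p\rangle=\bar e\,p$, where $\bar e$ is a fixed self-delimiting code of $e$ and $p$ is already self-delimiting for $M_e$. Then $|\langle e,p\rangle|=|p|+c_e$ with $c_e=|\bar e|$, independent of $n,S_n,H_n$. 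If the paper's fixed encoding were instead something like $\langle e,p\rangle=\bar e\,\bar p$, the overhead would grow with $|p|$. This is why I fix the concatenation convention explicitly, relying on the fact that the domain of each $M_e$ is already prefix-free.

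The main step is to check that $\langle e,p\rangle\in\mathcal P_U(S_n\parallel H_n)$. I will argue by induction on the stage $i\le n$. The hypothesis is that after $U$ has read $\langle i,h_i\rangle$, its simulation is in the same configuration as $M_e$ on $p$ after reading that frame. The simulation reads an input frame exactly when $M_e$ does, and writes an output exactly when $M_e$ does. Because $M_e$ must finish the self-delimiting output $s_i^*$ before reading $\langle i+1,h_{i+1}\rangle$, the same holds for $U$. At the end, $M_e$ halts after $\bot$, so $U$ halts as well. The schedule-preserving property in the definition of $U$ is exactly what this induction requires. I expect this to be the main obstacle: one has to be precise that the simulation never looks ahead at a frame before the simulated machine requests it. Any buffering of input would break causality and put $\langle e,p\rangle$ outside $\mathcal P_U$.

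Given this, the minimum defining $K_{\mathrm{on}}^U$ is at most $|p|+c_e$, which yields the first inequality. When $\mathcal P_{M_e}(S_n\parallel H_n)$ is empty, the left side is finite or $+\infty$, and I adopt the convention $\min\varnothing=\infty$, so the bound holds trivially.

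For the second claim, let $V$ be another optimal universal causal machine. Since $V$ is itself a causal interactive prefix machine, it appears in the enumeration as $V=M_{e_V}$. The first part then gives $K_{\mathrm{on}}^U\le K_{\mathrm{on}}^V+c_{e_V}$. Symmetrically, $U$ is simulated by $V$, either because $V$ is universal for the same class or by the definition of optimality, so $K_{\mathrm{on}}^V\le K_{\mathrm{on}}^U+c'$. Taking $c_{U,V}=\max(c_{e_V},c')$ gives the two-sided bound.
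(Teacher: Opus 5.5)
Your proposal is correct and follows the paper's proof. $U$ simulates $M_e$ on $\langle e,p\rangle$ with a transcript-independent overhead $c_e$ for coding $e$, the simulation preserves the causal lock between frames, and the two-sided bound follows by applying the argument in both directions; your explicit convention $\langle e,p\rangle=\bar e\,p$ (using that $p$ is already self-delimiting), the stage-by-stage induction, and the $\min\varnothing=\infty$ convention spell out details the paper's proof leaves implicit.
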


\begin{proof}
The machine $U$ simulates $M_e$ from $\langle e,p\rangle$.  The fixed prefix coding $e$ adds $c_e$ bits independently of the transcript, and the simulation preserves the causal lock between consecutive inputs.  Applying the argument in both directions to optimal universal machines yields the second inequality.
\end{proof}

\subsection{Normalisation and the strong model}

Let $C_i(h_i)\subseteq\Gamma(F_{h_i})$ be the finite set of admissible choices at stage $i$, computable from $h_i$, and put $m_i=|C_i(h_i)|$.  Define
\[
  L_n=\log_2(m_1\cdots m_n).
\]
For binary choices, $m_i=2$ and $L_n=n$.

\begin{defi}[Causal algorithmic incompressibility]\label{def:aik}
The process satisfies $\AIK$ if there is a constant $c$ such that, for every $n$,
\[
  \Kon(S_n\parallel H_n)\ge L_n-c.
\]
\end{defi}

\begin{defi}[Strong ontological model]\label{def:laok}
Let $\LAo$ denote global plurality, structural non-singularisation, and pointing by the act.  We define
\[
  \LAo^K:=\LAo+\AIK.
\]
Thus incompressibility is an independent quantitative axiom, not a consequence of $\LAo$ alone.
\end{defi}

\begin{lem}[Compression by a uniform predictor]\label{lem:predictor}
If a uniform program $A$ satisfies $A(h_i)=s_i^*$ for every $i$, then
\[
  \Kon(S_n\parallel H_n)\le |A|+O(1).
\]
\end{lem}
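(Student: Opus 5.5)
The plan is to exhibit an explicit element of $\mathcal P_U(S_n\parallel H_n)$ built from $A$, whose length exceeds $|A|$ by a constant independent of $n$, $S_n$ and $H_n$, and then take the minimum in \autoref{def:kon}. First I fix a single causal interactive prefix machine $M_{e_0}$, chosen once and for all, which acts as a predictor wrapper. On a self-delimiting program $q$, it stores $q$ and waits for frames. At each frame $\langle i,h_i\rangle$ it runs $q$ as an ordinary program on input $h_i$ and emits the self-delimiting encoding of the result. Only then does it read the next frame, and it halts on reading $\bot$. This respects the causal schedule of \autoref{sec:online}, since the output at stage $i$ depends only on $q$ and $h_i$, and at most on earlier frames if $A$ is allowed to use them. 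The hypothesis ``uniform'' is read as saying that the one program $A$ works at every stage, so the same $q=A$ serves for all $i$ simultaneously.

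Next I check that $q=A$ lies in $\mathcal P_{M_{e_0}}(S_n\parallel H_n)$ for every $n$. By hypothesis $A(h_i)=s_i^*$, so stage $i$ outputs $s_i^*$ before the next frame is read. The injective self-delimiting encoding fixed in \autoref{sec:online} ensures that this output is the required object. Hence $K_{\mathrm{on}}^{M_{e_0}}(S_n\parallel H_n)\le |A|$, with $|A|$ understood as the length of the self-delimiting code of $A$. If $A$ is given as a plain program, a self-delimiting wrapper adds $O(\log|A|)$ bits, but that cost is still independent of $n$, since $A$ is fixed.

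Finally I apply \autoref{lem:invariance} to $M_{e_0}$, which gives $\Kon(S_n\parallel H_n)\le |A|+c_{e_0}$. Here $c_{e_0}$ depends only on the wrapper and $U$, so it is an $O(1)$ term independent of $n$, $S_n$ and $H_n$. Equivalently, the program $\langle e_0,A\rangle$ for $U$ witnesses the bound directly.

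The main obstacle I expect is termination. $A$ must halt on every $h_i$ for the wrapper to finish each stage and so remain a legitimate causal machine. I would handle this by taking ``$A(h_i)=s_i^*$ for every $i$'' to include halting on those inputs, which is all that membership in $\mathcal P_U$ requires. Behaviour on other inputs is irrelevant because $\mathcal P_U$ only constrains the actual transcript.
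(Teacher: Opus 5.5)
Your proposal is correct and is essentially the paper's argument: the paper lets $U$ simulate $A$ on each incoming history and emit the result before reading the next frame, with the fixed simulator and protocol contributing the $O(1)$ term. You make the same idea explicit by packaging the simulation as a fixed wrapper machine $M_{e_0}$ and invoking \autoref{lem:invariance}; your remarks on the self-delimiting length of $A$, and on requiring $A$ to halt only on the actual histories, usefully clarify points the paper leaves implicit.
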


\begin{proof}
At each stage, the universal machine simulates $A$ on the current history and returns its result before accepting the next frame.  The simulator and the protocol have fixed descriptions.
\end{proof}

When $L_n\to\infty$, \autoref{lem:predictor} contradicts $\AIK$.  Uniform algorithmic unpredictability is therefore a consequence of the quantitative axiom rather than a separate clause.

\section{Post-act traces}\label{sec:traces}

\subsection{Causal production protocol}

At each stage we distinguish
\[
 h_i\longrightarrow a_i\longrightarrow(s_i^*,z_i^+)\longrightarrow t_i,
\]
where $a_i$ is the actualising event, $z_i^+$ is the immediate post-act state, and
\[
  t_i=\Obs(z_i^+)
\]
is produced by a fixed computable observation procedure.  The trace is released only after $s_i^*$ has been produced and cannot be an input to its pre-act computation.

\begin{defi}[Admissible trace system]\label{def:traces}
A family $T_n=(t_1,\ldots,t_n)$ is admissible when:
\begin{enumerate}
  \item every $t_i$ is obtained from the post-act state $z_i^+$ by the same procedure $\Obs$;
  \item one computable reconstruction procedure $R$, fixed independently of $i$ and $n$, satisfies
  \[
    R(h_i,t_i)=s_i^*;
  \]
  \item the protocol prevents $R$ and any pre-act predictor from accessing future data.
\end{enumerate}
\end{defi}

The case $t_i=s_i^*$ is allowed: it is an explicit maximal trace.  The issue is not to make the trace mysterious, but to quantify what any adequate trace must carry.

Using the same causal protocol with $(h_i,t_i)$ supplied at stage $i$, define the reconstruction deficiency
\[
  d_n:=\Kon(S_n\parallel H_n,T_n).
\]
A fixed uniform reconstructor gives $d_n=O(1)$, while the parameterised form also covers imperfect traces.

\section{Singularisation information and transfer}\label{sec:transfer}

\begin{defi}[Singularisation information]\label{def:ising}
\[
  \Sing(n):=Kon(S_n\parallel H_n)
  -\Kon(S_n\parallel H_n,T_n).
\]
This quantity is relative to the fixed protocol, encodings, and universal machine, up to an additive constant.
\end{defi}

\begin{thm}[Causal information transfer]\label{thm:transfer}
Assume $\AIK$ and a trace system with reconstruction deficiency $d_n$.  Then
\[
  \Sing(n)\ge L_n-c-d_n.
\]
For uniformly reconstructive traces, $d_n=O(1)$ and hence
\[
  \Sing(n)\ge L_n-O(1).
\]
In the binary case, direct encoding also gives
\[
  \Sing(n)\le n+O(\log n),
\]
so $\Sing(n)=n+O(\log n)$, in the sense of asymptotically maximal transferred information.
\end{thm}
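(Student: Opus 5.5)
The lower bound follows by substituting the two hypotheses into \autoref{def:ising}, and the upper bound from an explicit program that lists the choices. By definition,
\[
  \Sing(n)=\Kon(S_n\parallel H_n)-\Kon(S_n\parallel H_n,T_n)=\Kon(S_n\parallel H_n)-d_n .
\]
By \autoref{def:aik}, the first term is at least $L_n-c$, so $\Sing(n)\ge L_n-c-d_n$.

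For uniformly reconstructive traces I will bound $d_n$ by a construction in the spirit of \autoref{lem:predictor}. I take a causal interactive prefix machine that, at stage $i$, reads $\langle i,(h_i,t_i)\rangle$, runs the fixed reconstructor $R$ of \autoref{def:traces} on $(h_i,t_i)$, and outputs $R(h_i,t_i)=s_i^*$ before reading the next frame. Its program is empty, or a constant-length code for $R$. By \autoref{lem:invariance}, applied to the protocol with traces supplied at each stage, this gives $d_n\le c_R=O(1)$. The additive constant hidden in \autoref{def:ising} is absorbed, giving $\Sing(n)\ge L_n-O(1)$.

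For the binary upper bound I use $\Sing(n)\le \Kon(S_n\parallel H_n)$, which holds because $d_n\ge 0$ as a program length. It therefore suffices to show $\Kon(S_n\parallel H_n)\le n+O(\log n)$. The only delicate point is that the machine is online and does not know $n$ in advance, yet the program must be self-delimiting. I will use the program $p=\bar n\,b_1\cdots b_n$. Here $\bar n$ is a self-delimiting code of $n$ of length $O(\log n)$, for example Elias gamma, and $b_i$ indexes $s_i^*$ within $C_i(h_i)$ under a fixed computable ordering, using the hypothesis that $C_i(h_i)$ is computable from $h_i$ with $m_i=2$. The machine first reads $\bar n$, so it knows to consume exactly $n$ further bits. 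At stage $i$ it computes $C_i(h_i)$ from the current frame, reads $b_i$, outputs the indexed element, and then waits for the next frame or $\bot$. The program is self-delimiting of length $n+O(\log n)$, and the causal lock is respected because each $b_i$ is read after $h_i$ and before $h_{i+1}$. Combining both bounds with $L_n=n$ gives $n-O(1)\le\Sing(n)\le n+O(\log n)$, hence $\Sing(n)=n+O(\log n)$.

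I expect the main obstacle to be the bookkeeping in this last step rather than any deep point. One part is making the self-delimitation compatible with the online schedule, which the length header settles. The other is checking that $d_n$ is genuinely nonnegative and that the additive constants in \autoref{def:ising} and \autoref{lem:invariance} are uniform in $n$, so that no hidden $n$-dependence enters through the trace-augmented protocol.
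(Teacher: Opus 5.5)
Your proposal is correct and follows essentially the same route as the paper. It subtracts $d_n$ from the $\AIK$ lower bound, gets $d_n=O(1)$ from a machine that runs the fixed reconstructor $R$ at each stage, and bounds $\Sing(n)\le\Kon(S_n\parallel H_n)\le n+O(\log n)$ via a self-delimiting length header followed by the $n$ choices, using $d_n\ge 0$. Your indexing of $s_i^*$ within $C_i(h_i)$ is just a more careful rendering of the paper's ``encode the $n$ outputs directly.''
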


\begin{proof}
The first inequality subtracts the definition of $d_n$ from the $\AIK$ lower bound.  A fixed reconstructor $R$ yields $d_n=O(1)$.  In the binary case, a program may encode the $n$ outputs directly together with a self-delimiting description of their length, using $n+O(\log n)$ bits.  The second term in \autoref{def:ising} is nonnegative, giving the upper bound.
\end{proof}

The theorem is a conditional transfer statement, not an independent derivation of incompressibility.  Its content is the precise comparison of a causally incompressible pre-act regime with a uniformly reconstructible post-act regime.

\begin{prop}[Necessary trace complexity]\label{prop:trace-complexity}
Under $\AIK$ and uniform reconstruction, the traces satisfy
\[
  \Kon(T_n\parallel H_n)\ge L_n-O(1).
\]
\end{prop}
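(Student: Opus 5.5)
The plan is a composition argument: a short online description of the traces, followed by the fixed reconstructor $R$, yields a short online description of the choices, which contradicts $\AIK$ unless the traces themselves are nearly incompressible. Concretely, I will show
\[
  \Kon(S_n\parallel H_n)\le \Kon(T_n\parallel H_n)+c_R
\]
for a constant $c_R$ depending only on $R$, the protocol and $U$. Combining this with the $\AIK$ bound $\Kon(S_n\parallel H_n)\ge L_n-c$ then gives $\Kon(T_n\parallel H_n)\ge L_n-c-c_R=L_n-O(1)$, which is the statement.

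To get the displayed inequality, take a shortest $p\in\mathcal P_U(T_n\parallel H_n)$, that is, a program which at stage $i$ outputs $t_i$ after reading $\langle i,h_i\rangle$ and before reading the next frame. Build a causal interactive prefix machine $M$ that, on program $p$, simulates $U_p$ stage by stage. At stage $i$, once the simulation has emitted $t_i$, $M$ computes $R(h_i,t_i)=s_i^*$ from the current frame $h_i$, which it still holds, and from the simulated output $t_i$. It outputs $s_i^*$ and only then reads $\langle i+1,h_{i+1}\rangle$, which it forwards to the simulation. The simulated output $t_i$ is kept internal and never written to $M$'s output tape, so $M$ emits exactly one self-delimiting output per stage. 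On reading $\bot$, $M$ passes it to the simulation and halts when the simulation halts. Then $p\in\mathcal P_M(S_n\parallel H_n)$, so $K^M_{\mathrm{on}}(S_n\parallel H_n)\le |p|$. Since $M=M_e$ for some fixed index $e$ independent of $n$, \autoref{lem:invariance} gives the inequality with $c_R=c_e$.

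The step that needs care is checking that $M$ really is causal, so that \autoref{lem:invariance} applies. Two things must be verified. First, $R$ uses only $h_i$ and $t_i$, and $t_i$ comes from an online program that has seen only $h_1,\dots,h_i$, so no future frame is consulted. Second, the interleaving of simulated outputs and external reads respects the schedule, since each external read happens only after $s_i^*$ has been emitted. It is also worth noting that the post-act meaning of $t_i$ plays no role in the computation: $t_i$ is simply the output of a causal program on pre-act histories, and admissibility condition (2) is used only through the totality and correctness of $R$ on the actual pairs $(h_i,t_i)$. Uniformity of $R$, meaning that it is fixed independently of $i$ and $n$, is exactly what keeps $c_R$ constant; this is the same mechanism as in \autoref{lem:predictor}.
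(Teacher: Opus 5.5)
Your proposal is correct and follows the paper's route. You compose a shortest causal program for $T_n$ with the fixed reconstructor $R$ to get $\Kon(S_n\parallel H_n)\le\Kon(T_n\parallel H_n)+O(1)$, invoke $\AIK$, and your explicit simulating machine $M_e$ with the appeal to \autoref{lem:invariance} spells out the paper's one-line ``constant overhead'' step, including the check that the composed machine respects the causal schedule.
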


\begin{proof}
Compose any causal program describing $T_n$ from $H_n$ with the fixed reconstructor $R$.  This describes $S_n$ from $H_n$ with constant overhead.  Therefore
\[
  \Kon(S_n\parallel H_n)
  \le \Kon(T_n\parallel H_n)+O(1),
\]
and the claim follows from $\AIK$.
\end{proof}

\begin{cor}[Non-anticipability of traces]\label{cor:traces}
There is no uniform program $G$ satisfying $G(h_i)=t_i$ for every $i$ when $L_n\to\infty$.
\end{cor}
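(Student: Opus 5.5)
We argue by contradiction, composing the hypothetical trace predictor with the fixed reconstructor so as to obtain a uniform predictor of the choices themselves, and then invoking \autoref{lem:predictor} against $\AIK$. Suppose a uniform program $G$ satisfies $G(h_i)=t_i$ for every $i$. Let $R$ be the computable reconstruction procedure of \autoref{def:traces}, fixed independently of $i$ and $n$. Define the program $A$ by
\[
  A(h):=R\bigl(h,G(h)\bigr).
\]
Then $A(h_i)=R(h_i,t_i)=s_i^*$ for every $i$. The description length satisfies $|A|\le|G|+|R|+O(1)$, and the composition uses only the current history $h_i$. Hence $A$ respects the causal schedule of \autoref{sec:online}: it emits its stage-$i$ output before reading frame $i+1$, and it never consults a post-act datum, since $t_i$ is recomputed from $h_i$ rather than read from the protocol.

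By \autoref{lem:predictor}, for every $n$,
\[
  \Kon(S_n\parallel H_n)\le |G|+|R|+O(1),
\]
a bound independent of $n$. By $\AIK$ (\autoref{def:aik}), $\Kon(S_n\parallel H_n)\ge L_n-c$. Together these give $L_n\le |G|+|R|+c+O(1)$ for all $n$, which contradicts $L_n\to\infty$. Alternatively, one can route the argument through \autoref{prop:trace-complexity}. The program simulating $G$ at each stage witnesses $\Kon(T_n\parallel H_n)\le|G|+O(1)$, while the proposition gives $\Kon(T_n\parallel H_n)\ge L_n-O(1)$, which yields the same contradiction. I would present the second route as the main line, since it uses the immediately preceding result, and mention the first as the underlying mechanism.

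The only step needing care is the causal legitimacy of the composed program. We must check that running $G$ at stage $i$ is a valid causal interactive machine in the sense of \autoref{sec:online}. This requires that $G$ halt on each $h_i$ and produce a self-delimiting output before the next frame is read. Totality of $G$ on the realised histories is implicit in the hypothesis $G(h_i)=t_i$, and self-delimitation is guaranteed by the fixed encoding. I expect this verification of the schedule to be the main, though minor, obstacle. The rest is constant bookkeeping, in which the constants depend on $G$ and $R$ but not on $n$.
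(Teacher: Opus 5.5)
Your proof is correct, and your first route is exactly the paper's argument: the composite $h_i\mapsto R(h_i,G(h_i))$ is a uniform predictor of $s_i^*$, which contradicts \autoref{lem:predictor} together with $\AIK$ once $L_n\to\infty$. Routing it through \autoref{prop:trace-complexity} is not a genuinely different approach, since that proposition is itself proved by the same composition with $R$ (and that route also needs the trace analogue of \autoref{lem:predictor}, which holds by the identical simulation argument).
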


\begin{proof}
Otherwise $h_i\mapsto R(h_i,G(h_i))$ would uniformly predict $s_i^*$, contradicting \autoref{lem:predictor} and $\AIK$.
\end{proof}

\section{Certification and the boundary of complexity theory}\label{sec:complexity}

A relation $R_{\mathsf{GLUE}}(h,s)$ may verify that $s\in\Gamma(F_h)$.  Finding any admissible section is not the same as predicting the section that will be actualised.  The distinction between \textsf{GLUE} and \textsf{SELECT} persists independently of search cost.

Certification of actualisation must be temporally indexed.  If a timeless relation $Q(h,s,\tau)$ already distinguished a unique $s$ for every $h$, it would reintroduce an extensional pre-act selector.  A suitable relation instead depends on the event or its trace, for example $Q(h,t,s,\tau)$.  After the act it may identify a unique result; before the act, $t$ is not part of the available instance.

For the limited connection with P versus NP, suppose that $Q(h,t,s,\tau)$ is decidable in polynomial time and polynomially balanced: for a fixed polynomial $q$,
\[
  |s|+|\tau|\le q(|h|+|t|).
\]
Assume also that every realised post-act instance satisfies
\[
  \exists!s\,\exists\tau\;Q(h,t,s,\tau)=1.
\]
Under P${}={}$NP, the corresponding FP${}={}$FNP consequence and standard bit-by-bit search \cite{Papadimitriou1994} would compute the unique result $M(h,t)=s^*$ in polynomial time.  It would not yield $M(h)=s^*$: the trace belongs to the post-act instance and not to the pre-act input.  P${}={}$NP may reduce the cost of exploiting information contained in a complete instance; it cannot supply a missing input component.  Under $\LAo^K$, a uniform predictor $h_i\mapsto s_i^*$ would contradict $\AIK$ whenever $L_n\to\infty$.

Nothing here separates P from NP, defines a complexity class beyond the computable classes, or resolves the Millennium Problem.  The conclusion is only that, under $\LAo^K$, the pre-act mapping in question is not uniformly computable, so asking whether its running time is polynomial or exponential is inapplicable to that particular object.

\section{Non-vacuity}\label{sec:nonvacuity}

\begin{prop}[Existence of a formal model]\label{prop:existence}
There is a model satisfying structural non-singularisation, $\AIK$, and uniform post-act reconstruction simultaneously.
\end{prop}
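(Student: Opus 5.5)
The plan is to build an explicit binary model from a Martin-L\"of random sequence and check the three requirements one at a time.

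\textbf{Topos part.} Take every site to be trivial, so each $\mathcal E_{h}$ is equivalent to $\mathbf{Set}$. Put $F_h=\{0,1\}$ and let $\mathcal D_h$ consist only of the underlying two-element set, with no labelling of its points. Then $\Gamma(F_h)=\{0,1\}$ is finite, effectively encodable and plural, and $G_h=\Aut(F_h,\mathcal D_h)\cong\mathbb Z/2$ acts by the swap. The swap has no fixed point, so $\Gamma(F_h)^{G_h}=\varnothing$ and \autoref{def:structural} holds. By \autoref{prop:no-selector} there is therefore no intrinsic selector. The act is modelled by pointing $F_{h_i}$ at $s_i^*$, which reduces the symmetry group to the trivial stabiliser.

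\textbf{Choices and histories.} Fix a Martin-L\"of random $\omega\in\{0,1\}^{\mathbb N}$ and set $s_i^*=\omega_i$. Take $h_i=(i,s_1^*,\dots,s_{i-1}^*)$; this is a legitimate pre-act history, since it contains only earlier choices. Then $C_i(h_i)=\{0,1\}$, $m_i=2$ and $L_n=n$.

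\textbf{Traces.} Take the maximal trace $t_i=s_i^*$, with $\Obs$ reading the result out of $z_i^+$, and reconstruct by $R(h,t)=t$. Both maps are computable and fixed independently of $i$ and $n$, so the system is admissible. A fixed program for $R$ gives $d_n=O(1)$.

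\textbf{Verifying $\AIK$.} This is the main obstacle: we must show $\Kon(S_n\parallel H_n)\ge n-c$. The histories reveal all earlier choices, so only the final bit is hidden at each stage; a naive argument gives nothing. The fix is to pass through prefix-free complexity.

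First, note that $H_n$ is a function of $S_{n-1}$ alone, and $n$ itself is recoverable from it. Given any $p\in\mathcal P_U(S_n\parallel H_n)$, we can reconstruct $S_n$ by a fixed procedure with no external input. Run $U_p$ and feed it frame $\langle 1,h_1\rangle$ with $h_1=(1)$. Take its output $s_1^*$, build $h_2$ from it, and continue in this way. The point to handle carefully is when to stop feeding frames and supply $\bot$, since $n$ is not given.

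I would try to avoid encoding $n$ at all, because an $O(\log n)$ cost would ruin the constant $c$. The idea is to use the prefix-freeness of the program code together with a Levin--Schnorr style argument. For each $n$, define the set of pairs $(p,n)$ where $p$ succeeds for $n$ stages and then halts after $\bot$. The machine $M$ on input $p$ interleaves the simulations of $U_p$ for all candidate stopping points, and outputs the string produced by any run that halts after receiving $\bot$.

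The difficulty is that the same $p$ might succeed for several values of $n$. In that case $M(p)$ is ill-defined unless we choose one, say the first run found to halt. We must then check that this choice does not break the argument. If $p$ is the program witnessing $\Kon(S_n\parallel H_n)$, and the first halting run found corresponds to some $n'\neq n$, we recover $S_{n'}$ rather than $S_n$. I would address this by bounding the resulting Kraft-type sum
\[
\sum_n 2^{-\Kon(S_n\parallel H_n)}\le \sum_p 2^{-|p|}\cdot\#\{n: p\text{ succeeds for } n\}.
\]

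If that sum cannot be controlled, the fallback is to charge $K(n)$ bits for specifying $n$. This yields
\[
K(S_n)\le \Kon(S_n\parallel H_n)+K(n)+O(1).
\]
Combined with the Levin--Schnorr bound $K(\omega_{1..n})\ge n+K(n)-O(1)$, which holds for Martin-L\"of random $\omega$, this gives $\Kon\ge n-O(1)$.

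The second route is clean and is the one I expect to work. So the final proof will cite the strong Levin--Schnorr/Miller--Yu lower bound, which is valid for every Martin-L\"of random sequence, and the interleaving subtlety disappears. Combining all three parts, the model satisfies structural non-singularisation, $\AIK$ with a constant $c$, and uniform post-act reconstruction simultaneously.
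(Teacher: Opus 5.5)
Your topos and trace parts match the paper's, but your verification of $\AIK$ rests on a false lemma. No sequence $\omega$, random or not, satisfies $K(\omega_{1..n})\ge n+K(n)-O(1)$ for \emph{all} $n$; this is classical. To see it, let $n_m$ be the number whose binary expansion is $1\omega_1\cdots\omega_m$. A self-delimiting code for $m$, followed by the $n_m$ bits of $\omega_{1..n_m}$ written literally, is a valid description, because the first $m$ literal bits reveal $n_m$. Hence $K(\omega_{1..n_m})\le n_m+K(m)+O(1)$. On the other hand, for random $\omega$ we have $K(n_m)\ge K(\omega_{1..m})-O(1)\ge m-O(1)$. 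Along $n=n_m$ this gives $n+K(n)-K(\omega_{1..n})\ge m-K(m)-O(1)$, which is unbounded. What holds for every $n$ is only the Levin--Schnorr bound $K(\omega_{1..n})\ge n-O(1)$. Miller--Yu sharpen this to $K(\omega_{1..n})-n\to\infty$, not to an additive $K(n)$. Inserted into your correct inequality $K(S_n)\le\Kon(S_n\parallel H_n)+K(n)+O(1)$, the true bound yields only $\Kon(S_n\parallel H_n)\ge n-K(n)-O(1)=n-O(\log n)$. That is exactly the logarithmic loss you wanted to avoid, and it is not $\AIK$.

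The missing idea is in the first route you abandoned: you never need to pay for $n$. Because $h_i$ is built from earlier outputs, feeding $U_p$ its own outputs makes each program $p$ determine a single binary path $X_p$, independent of when $\bot$ arrives. Moreover, $p\in\mathcal P_U(S_n\parallel H_n)$ forces $S_n$ to be a prefix of $X_p$. So the cylinders $[S_n]$ on which $p$ compresses by more than $c$ bits are nested. Your Kraft-type sum charges one term per $n$ through the multiplicity $\#\{n:p\text{ succeeds for }n\}$, which is the wrong quantity. Instead, take the union of these cylinders: this is the cylinder of the \emph{first} prefix with $|p|<n-c$, of measure at most $2^{-|p|-c-1}$. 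Kraft's inequality then bounds the measure of the set of sequences violating $\AIK$ with constant $c$ by $2^{-c}$.

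This is precisely the paper's proof. The paper uses constant histories $h_i=h_0$, so the single-path property is immediate, and it picks a sequence outside the bad set by a measure argument. In your setting the bad sets are uniformly effectively open, so they form a Martin-L\"of test, and your random $\omega$ does satisfy $\AIK$ for some $c$. Equivalently, you should invoke the monotone form of Levin--Schnorr, $Km(\omega_{1..n})\ge n-O(1)$, together with $Km(S_n)\le\Kon(S_n\parallel H_n)+O(1)$, rather than the prefix-free form.
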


\begin{proof}
Take the terminal site, whose sheaf topos is equivalent to $\mathbf{Set}$, and the constant object $F=\{0,1\}$.  Let $G=\{\mathrm{id},\tau\}$, where $\tau$ exchanges $0$ and $1$.  Its action on $\Gamma(F)=\{0,1\}$ has no fixed point, so there is no equivariant map $\{*\}\to\Gamma(F)$.

Fix a computable history $h_0$ independent of the future sequence and set $h_i=h_0$ for every $i$; the stage index is supplied separately in the frame.  Hence $C_i(h_i)=\{0,1\}$ and $L_n=n$.  The successive responses of a causal program $p$ determine at most one binary path.  For each $p$, retain the first prefix it compresses by more than $c$ bits.  The associated cylinder has measure at most $2^{-|p|-c-1}$.  Kraft's inequality gives
\[
 \mu\!\left(\exists n:\Kon(S_n\parallel H_n)<n-c\right)\le 2^{-c}.
\]
For sufficiently large $c$, this set does not cover Cantor space.  Therefore there is an infinite sequence $S=(s_i^*)_{i\ge1}$ such that, for every $n$,
\[
  \Kon(S_n\parallel H_n)\ge n-c.
\]

At stage $i$, the act points the section $s_i^*:1_{\mathbf{Set}}\to F$.  Put $z_i^+=(h_0,s_i^*)$, $t_i=\Obs(z_i^+)=s_i^*$, and $R(h_i,t_i)=t_i$.  The reconstructor is uniform and gives $\Kon(S_n\parallel H_n,T_n)=O(1)$.  All formal conditions hold without the histories encoding the future sequence.
\end{proof}

This construction establishes that $\LAo^K$ is non-vacuous.  Its interpretation as a model of ontological free will is the conceptual choice stated at the outset.

\section{Ontological status and limits}\label{sec:scope}

Three levels must remain separate.
\begin{enumerate}
  \item \emph{Formal level.}  No section is selected by an intrinsic equivariant rule; under $\AIK$, no uniform Turing machine predicts the choices from pre-act histories; reconstructive traces carry asymptotically maximal information.
  \item \emph{Interpretive level.}  In the ontological reading of $\LAo^K$, the absence of structural and algorithmic selection is understood as the pre-act non-existence of singularisation information.  ``Adjunction'' is always relative to the formal causal information state.
  \item \emph{Empirical or metaphysical level.}  The formalism does not prove that information exists nowhere before an act or that human beings satisfy the model.  Applicability requires independent arguments.
\end{enumerate}

The mathematically warranted formulation is therefore: no element of the pre-act state formalised by $h_i$ provides an intrinsic invariant selection of $s_i^*$, and no uniform machine systematically computes $s_i^*$ from $h_i$.  The stronger claim that this information does not yet exist in reality belongs to the substantive interpretation of the model rather than to its formal consequences alone.

\section{Conclusion}\label{sec:conclusion}

The consolidated framework combines five components without conflating them: a plurality of globalisations structured by a sheaf; a topos-theoretic obstruction to intrinsic selection; an independent causal incompressibility axiom; observable and reconstructive post-act traces; and a theorem quantifying the information transferred between the two regimes.

Under $\LAo^K$ and uniform reconstruction, there is neither an intrinsic pre-act selector nor a uniform Turing predictor, and
\[
  \Sing(n)\ge L_n-O(1),
  \qquad
  \Kon(T_n\parallel H_n)\ge L_n-O(1).
\]
For binary choices, singularisation information is asymptotically maximal.  The conclusion is mathematically precise but conditional: it states the consequences of a strong model of ontological free will without turning that model into a metaphysical proof or a solution of P versus NP.

\paragraph*{Acknowledgements.}
Generative AI tools were used as interactive aids for drafting, \LaTeX{} preparation, and consistency checking.  The author conceived and directed the conceptual and mathematical work, verified all arguments and references, and assumes full responsibility for the content.

\bibliographystyle{alphaurl}
\bibliography{references}

\end{document}